%This is a template for producing LIPIcs articles.
%See lipics-manual.pdf for further information.

%\documentclass[a4paper,UKenglish]{lipics}
\documentclass{llncs}

\begin{document}
\title{First-order finite satisfiability  vs tree automata in safety verification}
\author{Alexei Lisitsa\inst{1}}
\institute{{Department of Computer Science, The University of Liverpool}
\email{{{A.Lisitsa@csc.liv.ac.uk}}}}
\bibliographystyle{plain}%
\maketitle

%\begin{document}
%\title{First-order finite satisfiability  vs tree automata in safety verification}
%\author{Alexei Lisitsa\inst{1}}
%\institute{{Department of Computer Science, The University of Liverpool}
%\email{{{A.Lisitsa@csc.liv.ac.uk}}}}
%\bibliographystyle{plain}%
%\maketitle
 
\begin{abstract}
%In this paper we deal with verification of safety properties of term-rewriting systems. 
%The verification problem is translated to  a purely logical problem of finding a finite 
%countermodel for a first-order formula, which further resolved by a generic finite model finding procedure.
%A finite countermodel produced during successful verification provides with a concise description of the system 
%invariant sufficient to demonstrate  a specific safety property.     

%We show the relative completeness of this approach with respect to the variants of tree automata completion 
%techniques. On a set of examples taken from the literature we demonstrate  the efficiency of 
%finite model finding approach as well as  its explanatory power.     

In this paper we deal with verification of safety properties of term-rewriting systems. 
The verification problem is translated to  a purely logical problem of finding a finite 
countermodel for a first-order formula, which further resolved by a generic finite model finding procedure.
A finite countermodel produced during successful verification provides with a concise description of the system 
invariant sufficient to demonstrate  a specific safety property.     

We show the relative completeness of this approach with respect to the tree automata completion 
technique. On a set of examples taken from the literature we demonstrate  the efficiency of 
finite model finding approach as well as  its explanatory power.

 \end{abstract}

%%  Choose one of the following 4:

%\rtacat{Regular Research Paper} %le 15 pages
%\rtacat{Experience Paper} %le 15 pages
%\rtacat{Problem Sets} %le 15 pages
%\rtacat{System Description} %le 10 pages

\section{Introduction}
The development of general automated methods for the 
verification of infinite-state and parameterized systems poses a major challenge. In general, such problems 
are undecidable, so one can not hope for the ultimate solution and the  development should focus on the restricted 
classes of systems and properties.   
%\marginpar{??More} 

In this paper we deal with a very general method for verification of \emph{safety} properties of infinite-state systems  which is based  on a  simple idea.  If an evolution of a computational system is faithfully modelled by a derivation in a classical first-order logic then safety verification (non-reachability of unsafe states)  can be reduced to the disproving of a first-order formula. The latter task can be  (partially, at least) tackled by generic automated procedures searching for \emph{finite} countermodels.  

Such an approach to verification was originated in the research on formal verification of security protocols \cite{Weid99,S01,GL08} and later has
 been extended to the wide classes of infinite-state and parameterised verification tasks. Completeness of the approach for  particular classes of systems (lossy channel systems) and 
relative completeness 
with respect to general method of regular model checking has been established in \cite{AL10} and \cite{AL10arxiv} respectively. 
 
Here we continue investigation of the boundaries of applicability of finite countermodels based method and are looking into verification of 
safety properties of  term-rewriting systems (TRS). Term-rewriting systems provide with a general 
formalism for specification and verification of infinite-state systems. 
Several general automated methods for verification of safety properties of term-rewriting systems has been proposed and 
implemented \cite{Timbuk2001,Reach2004,LPAR08} with the methods   based on tree automata completion \cite{Timbuk2001,Reach2004} playing the major role. 
%\marginpar{Encoding Horn clauses}

We show that verification via finite countermodels (FCM) approach  provides with a viable alternative to the methods
based on the tree automata completion.  We show the relative completeness of FCM with respect to the tree automata completion methods (TAC). 
%In particular the extensions of TAC with \emph{normalization rules} \cite{}, 
%\emph{equational abstractions} \cite{}, \emph{equational approximations} \cite{} are subsumed by FCM approach.
%\marginpar{FCM is more specilized than TAC}   

We illustrate it on a simple example taken from \cite{EqApp2010}.  Consider the TRS ${\cal R} = \{f(x) \rightarrow f(s(s(x))) \}$ and assume that
 we want to prove that $f(a) \not\rightarrow^{\ast} f(s(a))$. In \cite{EqApp2010} a simple finite-state abstraction of the set of reachable terms expressed by the equation $E = \{s(s(x) = x\}$  
is explicitly added to the TRS and simple analysis of rewriting modulo $E$ is proposed. In FCM approach, the same problem is translated into disproving of the first-order formula $\varphi_{\cal R} :=  (\forall x R(f(x),f(s(s(x)))) \rightarrow R(f(a),f(s(a))$. The intended meaning of the binary predicate $R$ here is to encode the reachability relation for the TRS.   
The finite countermodel of $\varphi_{\cal R}$, having the size 2 (cardinality of the domain) and  
essentially representing the above abstraction, i.e. satisfying $s(s(x) = x$, can be found by an automated model finder, e.g. Mace4 in a fraction of a second.

On a series examples taken from the literature we demonstrate practical efficiency of FCM approach using off-the shelf and state of the art implementation of a finite model finding procedure Mace4 (W. McCune); illustrate the high degree of automation achievable as well as the explanatory power of the method.

\section{Preliminaries}

In this paper we use standard terminology for first-order predicate logic and term-rewriting systems, and the  
for detailed accounts of these areas the  reader is referred to \cite{End} and to \cite{TR}, respectively. We remind here only the concepts which we are going to use in the paper.

\subsection{First-order Logic} 

%First-order language, interpretations, models, finite models. Finite model building procedures. 

The \emph{first-order vocabulary} is  defined as a finite set $\Sigma = {\cal F} \cup {\cal P}$ where ${\cal F}$ and ${\cal P}$ are the sets of functional and predicate symbols, 
respectively. Each symbol in $\Sigma$ has an associated arity, and we have ${\cal F} = \cup_{i \ge 0} {\cal F}{i}$ and  ${\cal P} = \cup_{i \ge 1} {\cal P}_{i}$, where ${\cal F}_{i}$ and
${\cal P}_{i}$ consist of symbols of arity $i$. The elements of ${\cal F}_{0}$ are also called \emph{constants}.

\emph{First-order model} over vocabulary $\Sigma$, or just a \emph{model} is  a pair ${\cal M} = \langle D, 
[\Sigma]_{D} \rangle $ where $D$ is a set called \emph{domain} of ${\cal M}$ and $[\Sigma_{D}]$ denotes the 
\emph{interpretations}  of all symbols from  $\Sigma$ in $D$. For a domain $D$ and a function symbol $f$ of arity $n \ge 1$ an interpretation of $f$ in $D$ is a function $[f]_{D}: D^{n} \rightarrow D$.  For a constant $c$ its interpretation $[c]_{D}$ is an element of $D$. For a domain $D$ and a predicate symbol $P$ of arity $n$ an interpretation of $P$ in $D$ is 
a relation of arity $n$ on $D$, that is $[P]_{D} \subseteq D^{n}$.  The model ${\cal M} = \langle D, 
[\Sigma]_{D} \rangle $ is called \emph{finite} if $D$ is a finite set. 

We assume that the reader is familiar with the standard definitions of first-order \emph{formula},  
first-order \emph{sentence}, satisfaction ${\cal M} \models \varphi$ of a formula $\varphi$ in a model ${\cal M}$, deducibility (derivability) $\Phi \vdash \varphi$ of a formula $\varphi$ from a set of formulae $\Phi$.   We also use the existence of 
\emph{complete} finite model finding procedures for the first-order predicate logic \cite{Model,McCune}, which given a first-order sentence $\varphi$ eventually produce a finite model for $\varphi$ if such a model exists.

\subsection{Term-rewriting systems and tree automata}

To define a term-rewriting system we fix a finite set of functional symbols 
${\cal F}$, each associated with an arity and a set of variables ${\cal X}$. 
${\cal T}({\cal F}, {\cal X})$ and ${\cal T} $ denote the set of terms and ground terms, respectively, defined in the standard way using ${\cal F}$ and ${\cal X}$.    
The set of variables of a term $t$ is denoted by $Var(t)$.  A substitution is a function $\sigma: {\cal X} \rightarrow {\cal T}({\cal F}, {\cal X})$, which can 
be extended homomorphically in a unique way (and keeping the name)  to $\sigma: {\cal T}({\cal F}, {\cal X}) \rightarrow {\cal T}({\cal F}, {\cal X})$.  Application 
of a substitution $\sigma$ to a term $t$ we denote by $t\sigma$.

A term-rewriting system ${\cal R}$ is a set 
of rewrite rules $l \rightarrow r$ where $l,r \in {\cal T}({\cal F}, {\cal X})$, $l \not\in {\cal X}$ and $Var(r) \subseteq Var(l)$. 
The notion of a \emph{subterm} is   defined in a standard way. \emph{One-step rewriting relation} $\Rightarrow_{\cal R} \subseteq  {\cal T}({\cal F}, {\cal X}) \times  {\cal T}({\cal F}, {\cal X})$ is defined as follows: $t_{1} \Rightarrow_{\cal R} t_{2}$ holds iff $t_{2}$ is obtained from $t_{1}$ by replacement of a 
subterm $l\sigma$ of $t_{1}$ with a subterm $r\sigma$ for some rewriting rule $(l \rightarrow r)$ in ${\cal R}$ and some substitution $\sigma$. 
%the rewriting of a term with the rule is defined in a standard way. The TRS ${\cal R}$ induces a rewriting relation $
%\rightarrow_{\cal R}$ on terms, and its 
The reflexive and transitive closure $\Rightarrow_{\cal R}$ is denoted by $\Rightarrow_{\cal R}^{\ast}$. 

Definitions of tree automata we borrow largely  from  \cite{}. Let $Q$ be a finite set of symbols called \emph{states} which we  formally treat as functional symbols of arity $0$ (constants). We assume $Q \cap {\cal F} = \emptyset$.  Elements of ${\cal T}({\cal F} \cup Q)$ are called \emph{configurations}.    

\begin{definition}(Transitions) 
A \emph{transition} is a rewrite rule $c \rightarrow q$, where $c$ is a configuration, i.e.  
$c \in {\cal T}({\cal F} \cup Q)$,  and $q \in Q$. A \emph{normalized} transition is a transition $c \rightarrow q$ where $c = f(q_{1}, \ldots, q_{n})$, $f$ is a functional symbol of arity $n$ from   ${\cal F}$, $q, q_{1}, \ldots q_{n} \in Q$. An $\epsilon$-\emph{transition} $c \rightarrow q$ is such that $c \in Q$.     
\end{definition} 

\begin{definition}(Tree automata)~\label{def:automata} 
A (bottom-up, non-deterministic, finite) tree automaton is a quadruple ${\cal A} = \langle F,Q,Q_{f},\Delta  \rangle$, where $Q_{f} \subseteq Q$ is a set of final (accepting) states and $\Delta$ is a set of 
normalized transitions and of $\epsilon$-transitions.   
\end{definition}

Transitions $\Delta$ of ${\cal A}$ induce the \emph{rewriting relation} on ${\cal T}({\cal F} \cup {\cal Q})$ which is denoted by $\Rightarrow_{\Delta}$ or $\Rightarrow_{\cal A}$.

\begin{definition} (Recognized language)~\label{def:language} 
The tree language recognized by ${\cal A}$ in a state $q$ is $L({\cal A}, q) = \{t \in {\cal T}({\cal F}) \mid t \Rightarrow_{\cal A}^{\ast} q\}$. The language recognized by ${\cal A}$ is ${\cal L}({\cal A}) = \cup_{q \in Q_{f}} {\cal L}({\cal A},q)$. 

\end{definition}

\begin{example} (Tree automaton and recognized language)
Let ${\cal F} = \{f,a,b\}$ and ${\cal A} = \langle {\cal F},Q,Q_{f}, \Delta \rangle$, where $Q = \{q_{1}, q_{2}\}$, $Q_{f} = \{ q_{1}\}$, and $\Delta = \{f(q_{1}) \rightarrow q_{1},  a \rightarrow q_{1}, b \rightarrow q_{2},q_{2} \rightarrow q_{1} \}$. Then ${\cal L}({\cal A},q_{1}) = {\cal T}(f,a,b)$, that is the set of all terms build on $\{f,a,b\}$,  
and ${\cal L}({\cal A},q_{2}) = \{b\}$. 
\end{example}

Deterministic bottom-up tree automata have the same expressive power as non-deterministic bottom-up tree automata, 
that is they recognize the same classes of term languages. In what follows we assume that automata are \emph{deterministic}, unless otherwise 
specified.

\section{Safety via finite countermodels}

\subsection{Basic verification problem} 

The main verification problem we consider in this paper is as follows.

\begin{problem}~\label{prob:basic}

\noindent
\begin{description}
\item[{\bf Given:}] Tree automata ${\cal A}_{I}$ and ${\cal A}_{U}$, a term-rewriting system 
${\cal R}$

\item[{\bf Question:}] Does  $\forall t_{1} \in {\cal L}({\cal A}_{I}) \; \forall t_{2} \in {\cal L}({\cal A}_{U})\; 
t_{1} \not\Rightarrow_{\cal R}^{\ast} t_{2}$ hold?    
\end{description}

\end{problem} 

In applications, we assume that the states of a  computational system to be verified are  
represented by terms, the system evolution (computation) is represented by ${\cal R}$;    
tree automata ${\cal A}_{I}$ and ${\cal A}_{U}$ provide with finitary specifications of the (infinite, in general) 
sets of allowed \emph{initial} states and the sets of \emph{unsafe} states, presented by 
${\cal L}({\cal A}_{I})$ and 
${\cal L}({\cal A}_{U})$, respectively.   Under such assumptions, safety of the system is equivalent to 
the positive answer on the question of the Problem~\ref{prob:basic}.   

Modifications of the basic problem will be considered later.

\subsection{Translation of the basic verification problem}~\label{sec:translation} 

In this subsection we show how to reduce the basic verification problem to the problem of disproving of a formula from classical first-order predicate logic. 

\noindent 
First,  we define the translation $\Phi_{R}$ of a term-rewriting system ${\cal R}$ over ${\cal T}({\cal F}, {\cal X})$ into a set of first-order formulae in the vocabulary ${\cal F} \cup \{R\}$, where $R$ is a new binary predicate symbol.     
Let $\Phi_{\cal R} =  \Phi^{r}_{\cal R} \cup \Phi_{\cal F}$, where 
$\Phi^{r}_{\cal R}  = \{R(l,r) \mid (l \rightarrow r) \in {\cal R}  \}$  
and $\Phi_{\cal F}$ is the set of the following formulae, which are all assumed to be universally closed and where 
$x_{1}, \ldots x_{i}, \ldots x_{n}, x_{i}^{'}$ are distinct variables: 

\begin{enumerate}
\item $R(x,y) \land R(y,z) \rightarrow R(x,z)$     \hspace*{37mm}  \emph{\bf transitivity axiom} 

\item  $R(x_{i}, x_{i}^{'}) \rightarrow R(f(x_{1}, \ldots, x_{i}, \ldots x_{n}),f(x_{1},\ldots, x_{i}^{'}, \ldots x_{n}))$  
for every $n$-ary functional symbol $f$ from ${\cal F}$ and every position $i$: $1 \le i \le n$  \\ \hspace*{80mm} \emph{\bf congruence axioms}

\end{enumerate}

Under such a translation first-order derivabiliy faithfully models rewriting in ${\cal R}$ as the following proposition shows.

\begin{proposition}~\label{prop:adequacyTR}
For ground terms $t_{1}$, $t_{2} \in {\cal T}({\cal F})$ if $t_{1} \Rightarrow^{\ast}_{\cal R} t_{2}$ then $\Phi_{\cal R} 
\vdash R(t_{1}, t_{2})$.  
\end{proposition}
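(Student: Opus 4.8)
The plan is to proceed by induction on the length $n$ of the rewriting derivation witnessing $t_{1} \Rightarrow^{\ast}_{\cal R} t_{2}$. The two features of $\Phi_{\cal R}$ that I expect to carry the argument are the base facts $\Phi^{r}_{\cal R}$ together with the transitivity axiom, which will let me chain individual steps, and the congruence axioms, which will let me propagate a rewrite performed deep inside a term up to the whole term. So the real content lies in the single-step case: showing that $s \Rightarrow_{\cal R} s'$ already implies $\Phi_{\cal R} \vdash R(s,s')$ for ground $s,s'$.

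For a single step, recall that $s \Rightarrow_{\cal R} s'$ means $s = C[l\sigma]$ and $s' = C[r\sigma]$ for some rule $(l \rightarrow r) \in {\cal R}$, some substitution $\sigma$, and some one-hole context $C$. Since $s$ is ground, $\sigma$ maps every variable of $l$ to a ground term, and because $Var(r) \subseteq Var(l)$ the term $r\sigma$ is ground as well. Instantiating the universally closed base fact $R(l,r) \in \Phi^{r}_{\cal R}$ by $\sigma$ yields $\Phi_{\cal R} \vdash R(l\sigma, r\sigma)$. To climb out of the context I would prove, by a subsidiary induction on the structure of $C$, that $\Phi_{\cal R} \vdash R(u,v)$ implies $\Phi_{\cal R} \vdash R(C[u], C[v])$ for all ground $u,v$: the empty context is immediate, and if $C = f(\ldots, C', \ldots)$ with the hole in argument position $i$, then the induction hypothesis gives $R(C'[u], C'[v])$, and a single instance of the congruence axiom for $f$ at position $i$ followed by modus ponens delivers $R(C[u], C[v])$. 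Applying this with $u = l\sigma$ and $v = r\sigma$ closes the single-step case.

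It then remains to assemble the steps. For a derivation $t_{1} = s_{0} \Rightarrow_{\cal R} s_{1} \Rightarrow_{\cal R} \cdots \Rightarrow_{\cal R} s_{n} = t_{2}$ with $n \ge 1$, the single-step result gives $\Phi_{\cal R} \vdash R(s_{k}, s_{k+1})$ for each $k$, and $n-1$ applications of the transitivity axiom collapse these into $\Phi_{\cal R} \vdash R(t_{1}, t_{2})$, which I would formalise by an outer induction on $n$. The one delicate point — and the step I expect to be the main obstacle — is the reflexive base case $n = 0$, i.e. $t_{1} = t_{2}$, where one would need $\Phi_{\cal R} \vdash R(t_{1}, t_{1})$: this is \emph{not} derivable from the transitivity and congruence axioms alone, since no base fact and no congruence instance ever produces a reflexive atom at a constant. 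I would handle this either by restricting the claim to $t_{1} \neq t_{2}$ (which suffices for the intended safety application, where the initial and unsafe terms are distinct) or by adjoining a reflexivity axiom $R(x,x)$ to $\Phi_{\cal F}$, after checking that this does not compromise the existence of the finite countermodels relied upon later.
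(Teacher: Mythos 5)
Your proof follows essentially the same route as the paper's: the single-step case is handled by instantiating the base fact $R(l,r)$ under $\sigma$ and climbing out of the surrounding context by an inner induction that applies one congruence axiom per level (the paper phrases this as an induction along the chain of subterms $\tau_{i},\rho_{i}$ from the redex up to the root, which is the same argument), and the individual steps are then chained by the transitivity axiom. Your remark about the reflexive case $t_{1}=t_{2}$ is a genuine point that the paper glosses over --- its proof reduces to the one-step case via transitivity alone, which does not cover zero-step derivations --- and your proposed repair of adjoining $R(x,x)$ matches what the paper itself does ad hoc in the Reverse example and is compatible with the countermodel construction of Theorem~\ref{th:completeness}, since the relation $[R]$ built there already contains the diagonal on interpretations of ground terms.
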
 

\begin{proof}
Due to the transitivity of $R$ specified in $\Phi_{\cal R}$ it is sufficient to show that if 
$t_{1} \Rightarrow_{\cal R} t_{2}$ then $\Phi_{\cal R}  \vdash R(t_{1}, t_{2})$.
Assume $t_{1} \Rightarrow t_{2}$ then $t_{2}$ is obtained from $t_{1}$ by the replacement of some subterm $l\sigma$ of 
$t_{1}$ with a subterm $r\sigma$ for some $(l \rightarrow r) \in {\cal R}$ and some substitution $\sigma$. Consider two sequences of subterms $\tau_{0} = l\sigma, \tau_{1}, \ldots, \tau_{k} = t_{1}$ and $\rho_{0} = r\sigma, \rho_{1}, 
\ldots, \rho_{k} = t_{2}$ with the property that $\tau_{i}$ is an immediate subterm of $\tau_{i+1}$ within $t_{1}$ and $\rho_{i}$ is an immediate  subterm of $\rho_{i+1}$ within $t_{2}$,  $i = 0, \ldots, k$.  Then we show by easy induction on $i$  that 
$\Phi_{\cal R} \vdash R(\tau_{i},\rho_{i})$ for $i = 0, \ldots, k$.  Indeed, for $i=0$ we have $R(\tau_{0},\rho_{0})\equiv R(l\sigma,r\sigma)$ is a ground instance of $R(l,r) \in  \Phi^{r}_{\cal R}$ and therefore $\Phi_{\cal R} \vdash R(\tau_{0},\rho_{0})$. For the step of induction, assume $\Phi_{\cal R} \vdash R(\tau_{i}, \rho_{i})$.  Notice that by construction of sequences of $\tau$'s and $\rho$'s $\tau_{i+1}$ and $\rho_{i+1}$ should have the same outermost functional symbol $f$ and coincide everywhere apart of subterms $\tau_{i}$ and $\rho_{i}$. Let $\tau_{i+1} = f(\ldots,\tau_{i},\ldots)$ and $\rho_{i+1} = f(\ldots,\rho_{i},\ldots)$. Then we have $R(\tau_{i},\rho_{i}) \rightarrow 
R(\tau_{i+1}, \rho_{i+1})$ is a ground instance of one of the formulae in $\Phi^{r}_{\cal R}$. So we have 
$\Phi_{\cal R} \vdash R(\tau_{i},\rho_{i}) \rightarrow 
R(\tau_{i+1}, \rho_{i+1})$ and by inductive assumption 
$\Phi_{\cal R} \vdash R(\tau_{i},\rho_{i})$. It follows $\Phi_{\cal R} \vdash R(\tau_{i+1},\rho_{i+1})$. The induction step is completed. We have $\Phi_{\cal R} \vdash R(\tau_{k},\rho_{k})$, which is $\Phi_{\cal R} \vdash R(t_{1}, t_{2})$

\end{proof}

Now we define a first-order translation of a tree automaton.

Let ${\cal A} = \langle {\cal F},Q_{I},Q_{f},\Delta  \rangle$ be a tree automaton. let $\Sigma_{\cal A}$ be the following first-order vocabulary: 

\begin{itemize}
\item constants for all elements of $Q$; 
\item all functional symbols from ${\cal F}$;  
\item a binary predicate symbol $R$; 
\end{itemize} 

Let $\Phi_{\cal A}$ to be the set of first-order formulae in vocabulary $\Sigma_{\cal A}$ defined as  
$\Phi_{\cal A} = \Phi_{\Delta} \cup \Phi_{\cal F}$, where $\Phi_{\Delta} = \{R(c,q) \mid (c \rightarrow q) \in \Delta \}$ 
and $\Phi_{\cal F}$ is as defined above.

As the following proposition shows first-order logic derivations from $\Phi_{\cal A}$ faithfully simulate the work of the automaton ${\cal A}$

\begin{proposition}~\label{prop:adequacyA} (Adequacy of automata translation)

If $t \in {\cal L}$ then  $\Phi_{\cal A} \vdash \vee_{q \in Q_{f}} R(t,q)$ 

\end{proposition}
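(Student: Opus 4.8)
The plan is to mirror the structure of the proof of Proposition~\ref{prop:adequacyTR}, replacing the role of the one-step rewriting relation $\Rightarrow_{\cal R}$ with the one-step automaton rewriting relation $\Rightarrow_{\Delta}$, and the set $\Phi^{r}_{\cal R}$ with $\Phi_{\Delta}$. The statement $t \in {\cal L}({\cal A})$ means, by Definition~\ref{def:language}, that there is some final state $q \in Q_{f}$ with $t \Rightarrow_{\Delta}^{\ast} q$. So the goal is really to establish the following reachability lemma: for any configuration $c \in {\cal T}({\cal F} \cup Q)$ and any state $q \in Q$, if $c \Rightarrow_{\Delta}^{\ast} q$ then $\Phi_{\cal A} \vdash R(c,q)$. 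Once this is proved, instantiating $c := t$ and the reached state $q \in Q_{f}$ gives $\Phi_{\cal A} \vdash R(t,q)$, whence $\Phi_{\cal A} \vdash \vee_{q \in Q_{f}} R(t,q)$ by weakening to a disjunction.

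First I would reduce the transitive closure to single steps exactly as before: since $\Phi_{\cal A}$ contains the transitivity axiom for $R$, it suffices to show that a single step $c_{1} \Rightarrow_{\Delta} c_{2}$ yields $\Phi_{\cal A} \vdash R(c_{1},c_{2})$, and then chain these derivations along the rewriting sequence $t = c_{0} \Rightarrow_{\Delta} c_{1} \Rightarrow_{\Delta} \cdots \Rightarrow_{\Delta} c_{m} = q$ using transitivity. Note that here the intermediate configurations $c_{i}$ live in ${\cal T}({\cal F} \cup Q)$ rather than in ${\cal T}({\cal F})$, but this causes no difficulty since the formulae of $\Phi_{\cal F}$ are universally quantified over the whole domain, and in any model the constants interpreting the states of $Q$ are ordinary domain elements.

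For the single-step claim, a step $c_{1} \Rightarrow_{\Delta} c_{2}$ replaces a subterm of the form $d\sigma$ (where $d \rightarrow q'$ is a transition in $\Delta$, normalized or $\epsilon$) with the corresponding $q'\sigma$; since transitions have the shape $f(q_{1},\ldots,q_{n}) \rightarrow q'$ or $q'' \rightarrow q'$ with no variables, in fact $d \rightarrow q'$ is already ground, and the redex and its contractum occur at some position inside $c_{1}$. Precisely as in Proposition~\ref{prop:adequacyTR}, I would set up the two chains of nested subterms $\tau_{0} = d, \tau_{1}, \ldots, \tau_{k} = c_{1}$ and $\rho_{0} = q', \rho_{1}, \ldots, \rho_{k} = c_{2}$, where each $\tau_{i}$ (resp. $\rho_{i}$) is an immediate subterm of $\tau_{i+1}$ (resp. $\rho_{i+1}$), and prove $\Phi_{\cal A} \vdash R(\tau_{i}, \rho_{i})$ by induction on $i$. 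The base case $i = 0$ is immediate because $R(\tau_{0},\rho_{0}) \equiv R(d, q')$ is literally a member of $\Phi_{\Delta}$; the inductive step is discharged verbatim using the congruence axioms from $\Phi_{\cal F}$, since $\tau_{i+1}$ and $\rho_{i+1}$ share an outermost symbol and differ only in the single argument position occupied by $\tau_{i}$ versus $\rho_{i}$.

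The proof is essentially a transcription of the earlier argument, so I do not anticipate a serious conceptual obstacle; the one point that requires a little care is the base case, where I must observe that transitions are ground rewrite rules, so that $\Phi_{\Delta}$ contains $R(d,q')$ outright with no instantiation needed, in contrast to $\Phi^{r}_{\cal R}$ where a substitution $\sigma$ had to be applied first. A second minor bookkeeping point is the final passage from $R(t,q)$ for one particular final state $q$ to the disjunction $\vee_{q \in Q_{f}} R(t,q)$, which is just the derivability of a disjunct implying the disjunction; and I should implicitly assume $Q_{f} \neq \emptyset$ (otherwise ${\cal L}({\cal A}) = \emptyset$ and the hypothesis $t \in {\cal L}({\cal A})$ is vacuous), so the disjunction is well-formed and nonempty.
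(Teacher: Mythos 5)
Your argument is correct and follows essentially the same route as the paper: the paper's own proof simply observes that $\Delta$ is itself a term-rewriting system over ${\cal T}({\cal F}\cup Q)$ whose translation is exactly $\Phi_{\Delta}\cup\Phi_{\cal F}$, and cites Proposition~\ref{prop:adequacyTR} directly, whereas you re-transcribe that proposition's induction for the automaton case. The extra care you take (groundness of transitions, nonemptiness of $Q_f$, weakening to the disjunction) is sound and only makes explicit what the paper leaves implicit.
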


\begin{proof}
The statement of the proposition follows immediately from 
Definitions~\ref{def:automata} and ~\ref{def:language} and Proposition~\ref{prop:adequacyTR}. 

\end{proof}

Now we are ready to define the translation of the basic verification problem. 
Assume we are given an instance $P = \langle{\cal A}_{I}, {\cal R}, {\cal A}_{U} \rangle $ of Problem~\ref{prob:basic}, 
with a term-rewriting system  ${\cal R}$ over ${\cal T}({\cal F}, {\cal X})$ and 
tree automata ${\cal A}_{I} = \langle {\cal F},Q_{I},Q_{{f}_{I}},\Delta_{I}  \rangle$, ${\cal A}_{U} = \langle {\cal F},Q_{U},Q_{{f}_{U}},\Delta_{U}  \rangle$. Assume also (without loss of generality) that sets ${\cal F}$, $Q_{I}$ and $Q_{U}$ are disjoint.

We define translation of $P$ as  $\Phi_{P}  = \Phi_{{\cal A}_{I}} \cup \Phi_{{\cal A}_{U}} \cup \Phi_{\cal R}$. By the above definitions we then also have $\Phi_{P} = \Phi_{\cal F} \cup \Phi_{\Delta_{I}} \cup \Phi_{\Delta_{U}} \cup \Phi^{r}_{\cal R}$.  We further define the translation of (negation of) correctness condition from $P$ as a formula 
$\psi_{P} = \exists x \exists y  \vee_{q_{i} \in Q_{I}, q_{u} \in Q_{U}} R(x,q_{i}) \land R(x,y) \land  R(y,q_{u})$. 

The following proposition and corollary serves as a formal underpinning of the proposed verification method. 

\begin{proposition}~\label{prop:correct}(Correctness of the translation)

Let $P$ be an instance of the basic verification problem as detailed above. 
Then if $P$  has a negative  answer then $\Phi_{P} \vdash \psi_{P}$
\end{proposition}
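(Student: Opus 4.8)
The plan is to read witnessing terms off the negative answer and then assemble the three adequacy results already proved. Suppose $P = \langle {\cal A}_{I}, {\cal R}, {\cal A}_{U} \rangle$ has a negative answer. By the formulation of Problem~\ref{prob:basic} this means precisely that the universally quantified non-reachability statement fails, so there exist ground terms $t_{1} \in {\cal L}({\cal A}_{I})$ and $t_{2} \in {\cal L}({\cal A}_{U})$ with $t_{1} \Rightarrow_{\cal R}^{\ast} t_{2}$. These two terms are exactly the witnesses I will substitute for the existential quantifiers $\exists x \exists y$ in $\psi_{P}$.

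Next I would invoke the three adequacy results on the three ingredients of the reachability witness, throughout using that $\Phi_{{\cal A}_{I}}$, $\Phi_{{\cal A}_{U}}$ and $\Phi_{\cal R}$ are all subsets of $\Phi_{P} = \Phi_{{\cal A}_{I}} \cup \Phi_{{\cal A}_{U}} \cup \Phi_{\cal R}$, so that derivability is preserved by monotonicity of $\vdash$. From $t_{1} \in {\cal L}({\cal A}_{I})$ and Proposition~\ref{prop:adequacyA} applied to ${\cal A}_{I}$ I obtain $\Phi_{P} \vdash \bigvee_{q_{i}} R(t_{1}, q_{i})$; from $t_{2} \in {\cal L}({\cal A}_{U})$ and the same proposition applied to ${\cal A}_{U}$ I obtain $\Phi_{P} \vdash \bigvee_{q_{u}} R(t_{2}, q_{u})$; and from $t_{1} \Rightarrow_{\cal R}^{\ast} t_{2}$ together with Proposition~\ref{prop:adequacyTR} I obtain $\Phi_{P} \vdash R(t_{1}, t_{2})$.

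It then remains to fuse these three derivations into the single existential disjunction $\psi_{P}$. Since $R(t_{1}, t_{2})$ is derivable outright, I would conjoin it with each of the two derived disjunctions and distribute conjunction over disjunction, yielding $\Phi_{P} \vdash \bigvee_{q_{i}, q_{u}} \bigl( R(t_{1}, q_{i}) \land R(t_{1}, t_{2}) \land R(t_{2}, q_{u}) \bigr)$; a single application of existential introduction, with $x := t_{1}$ and $y := t_{2}$, then produces exactly $\psi_{P}$.

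This argument has no deep obstacle; the only delicacy lies in the last propositional step, where one must handle the interaction of the two disjunctions and the existential quantifiers with care, distributing the conjunction across both $\bigvee_{q_{i}}$ and $\bigvee_{q_{u}}$ \emph{before} generalising. I would also remark that Proposition~\ref{prop:adequacyA} supplies disjunctions ranging only over the final states $Q_{{f}_{I}} \subseteq Q_{I}$ and $Q_{{f}_{U}} \subseteq Q_{U}$, whereas $\psi_{P}$ disjoins over all of $Q_{I}$ and $Q_{U}$; this causes no trouble, since a disjunction over a subset of indices trivially entails the corresponding disjunction over the larger index set.
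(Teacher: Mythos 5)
Your proposal is correct and follows essentially the same route as the paper: the paper's own proof is a one-sentence appeal to Definitions~\ref{def:automata} and~\ref{def:language} together with Propositions~\ref{prop:adequacyTR} and~\ref{prop:adequacyA}, which is exactly the combination you spell out in detail. Your closing remark about the disjunction in $\psi_{P}$ ranging over all of $Q_{I}$ and $Q_{U}$ rather than just the final states is a point the paper silently glosses over, and you resolve it correctly.
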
 

\begin{proof}
The statement of the proposition immediately follows from Definitions~\ref{def:automata} and ~\ref{def:language} and Propositions~\ref{prop:adequacyTR} and ~\ref{prop:adequacyA}.
\end{proof} 

By contraposition we have the following 
                  
\begin{corollary}~\label{cor:correctness} 
If $\Phi_{P} \not\vdash \psi_{P}$ the instance $P$ has a positive answer and the safety property holds. 
\end{corollary}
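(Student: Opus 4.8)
The plan is to observe that the corollary is precisely the contrapositive of Proposition~\ref{prop:correct}, so the entire argument reduces to a single logical step together with an unpacking of what a ``positive answer'' means. First I would recall that Proposition~\ref{prop:correct} asserts the implication: if $P$ has a negative answer then $\Phi_{P} \vdash \psi_{P}$. Taking the contrapositive of this implication immediately gives: if $\Phi_{P} \not\vdash \psi_{P}$, then $P$ does not have a negative answer.

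Next I would close the small gap between ``not a negative answer'' and ``a positive answer''. The question posed in Problem~\ref{prob:basic} is a decision question, so its answer is either positive or negative, and these two possibilities are exhaustive and mutually exclusive; hence the negation of a negative answer is exactly a positive answer. By the interpretation fixed in the discussion following Problem~\ref{prob:basic}, where ${\cal L}({\cal A}_{I})$ and ${\cal L}({\cal A}_{U})$ represent the initial and unsafe states and $\Rightarrow_{\cal R}^{\ast}$ models the system evolution, a positive answer is by definition the statement that no unsafe term is reachable from any initial term, i.e.\ that the safety property holds. This yields the stated conclusion.

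Although the corollary is immediate once Proposition~\ref{prop:correct} is in hand, I would also indicate where its practical content lies: in applications the hypothesis $\Phi_{P} \not\vdash \psi_{P}$ is established not by inspecting failed derivations but by exhibiting a finite model ${\cal M}$ with ${\cal M} \models \Phi_{P}$ and ${\cal M} \not\models \psi_{P}$. By the soundness of first-order deduction such a countermodel guarantees $\Phi_{P} \not\vdash \psi_{P}$, and the existence of complete finite model finding procedures ensures that this route succeeds whenever a finite countermodel exists. The only genuine obstacle here is conceptual rather than technical: one must verify that Proposition~\ref{prop:correct} is oriented correctly, so that its contrapositive delivers a \emph{sound} verification method (non-derivability implies safety) rather than the useless converse; this in turn rests on the fact that Proposition~\ref{prop:correct} only claims \emph{completeness} of the encoding (reachability is always reflected by derivability), which is exactly the direction whose contrapositive is needed.
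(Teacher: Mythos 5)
Your proposal is correct and follows exactly the paper's own route: the paper derives the corollary simply ``by contraposition'' from Proposition~\ref{prop:correct}, which is precisely the single logical step you identify (your additional remarks on the dichotomy of answers and on establishing non-derivability via finite countermodels are sound elaborations but not needed beyond what the paper states).
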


\subsection{FCM method}

By FCM (finite countermodels) verification method we understand the following. Given an instance $P = \langle {\cal A}_{I}, {\cal R}, {\cal A}_{U}  \rangle$ of the basic verification problem, translate it into a set of first-order formulae $\Phi_{P}$ and a formula $\psi_{P}$ as described above. Then apply a generic finite model finding procedure to find a 
countermodel for  $\Phi_{P} \rightarrow  \psi_{P}$. If a countermodel found the safety property is established and the instance $P$ has got a positive answer.

\subsection{Relative completeness}

 In this section we show the \emph{relative completeness} of FCM with respect to verification methods based on tree automata completion techniques (TAC). More precisely, we show that if safety of TRS can be demonstrated by TAC, it can be demonstrated by FCM too. 

Given an instance $P$ of basic verification problem (Problem~\ref{prob:basic}) verification by TAC approach would proceed as follows. Starting from ${\cal A_{I}}$ and ${\cal R}$ completion procedure 
yields an automaton 
${\cal A}^{\ast}$ wich describes, in general, an \emph{overapproximation} of the set of terms reachable in ${\cal R}$ from ${\cal L}({\cal A_{I}}))$, that is 
$ {\cal L}({\cal A}^{\ast}) \supseteq \{t \mid  \exists t_{0} \in {\cal L}({\cal A_{I}}) \;\; t_{0} \rightarrow^{\ast}_{\cal R} t\}$.  Further, the check of whether ${\cal L}({\cal A^{\ast}}) \cap {\cal L}({\cal A_{U}}) = \emptyset$ is performed and, if it holds, the safety is 
established.

Exact description by of the set of all reachable terms  in a term-rewriting system by a tree automaton not always possible. The main direction in the development of TAC methods is a 
development of more efficient and more precise approximations methods.

\begin{theorem}~\label{th:completeness}
Let $P = \langle {\cal A}_{I}, {\cal A}_{U}, {\cal R} \rangle$ be a basic verification problem and there exists a tree automaton ${\cal A^{\ast}} = \langle {\cal F},Q^{\ast},Q^{\ast}_{f}, \Delta^{\ast}\rangle$ 
such that $ {\cal L}({\cal A}^{\ast}) \supseteq \{t \mid  \exists t_{0} \in {\cal L}({\cal A_{I}}) \;\; t_{0} \rightarrow^{\ast}_{\cal R} t\}$ and ${\cal L}({\cal A^{\ast}}) \cap {\cal L}({\cal A_{U}}) = \emptyset$. 
Then there exists a finite model ${\cal M}$ such that ${\cal M} \models \Phi_{P} \land \neg \psi_{P}$ (i.e ${\cal M}$ is a countermodel for 
$\Phi_{P} \rightarrow \psi_{P})$.

\end{theorem}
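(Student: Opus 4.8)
The plan is to read the completion automaton $\mathcal{A}^{\ast}$ as a finite $\mathcal{F}$-algebra and take that algebra as the domain of the desired countermodel. First I would make $\mathcal{A}^{\ast}$ deterministic and total by adding, if necessary, an absorbing non-final sink state, so that every ground term $t$ has a unique state $\mathrm{eval}(t) \in Q^{\ast}$ with $t \Rightarrow^{\ast}_{\mathcal{A}^{\ast}} \mathrm{eval}(t)$, and $\mathrm{eval}(t) \in Q^{\ast}_{f}$ iff $t \in \mathcal{L}(\mathcal{A}^{\ast})$. Since completion is built starting from $\mathcal{A}_{I}$ I may assume $Q_{I} \subseteq Q^{\ast}$ and $\Delta_{I} \subseteq \Delta^{\ast}$; the states of $\mathcal{A}_{U}$ I would incorporate into the same deterministic structure (combining $\mathcal{A}^{\ast}$ with $\mathcal{A}_{U}$) so that every state appearing in $\psi_{P}$ and in $\Phi_{\Delta_{I}}\cup\Phi_{\Delta_{U}}$ names a domain element. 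I take $D = Q^{\ast}$, interpret each $f \in \mathcal{F}$ by the total deterministic transition function of $\mathcal{A}^{\ast}$, and interpret each state-constant by itself.

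For the predicate $R$ I would take the \emph{least} relation on $D$ that contains all atoms demanded by $\Phi_{P}$ — the interpreted instances of $\Phi_{\Delta_{I}}$, $\Phi_{\Delta_{U}}$ and $\Phi^{r}_{\cal R}$ (the last for every assignment of variables to $D$) — and is closed under the transitivity and congruence axioms of $\Phi_{\cal F}$. Such a least relation exists because every formula of $\Phi_{P}$ is a Horn clause over $R$. By construction the resulting model $\mathcal{M}$ satisfies $\Phi_{P}$: transitivity and congruence hold by closure, and every required atom is present. The only routine point is that the congruence axioms stay valid on arguments that totalisation sends to the sink, which holds once $R$ is reflexive there.

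The crux is $\mathcal{M} \not\models \psi_{P}$, and this is where both completion hypotheses enter. I would first prove a characterisation lemma, in the spirit of Propositions~\ref{prop:adequacyTR} and~\ref{prop:adequacyA}, stating that $R$ is exactly the reachability relation of the combined system $\mathcal{R}\cup\Delta^{\ast}\cup\Delta_{U}$ lifted to $D$; concretely, $R(a,b)$ holds iff there are ground (mixed) terms evaluating to $a,b$ connected by $\Rightarrow^{\ast}$, so that from an $R$-fact one can recover witnessing ground $\mathcal{F}$-terms. Suppose toward a contradiction $\psi_{P}$ holds, giving $x,y\in D$, $q_{i}\in Q_{f_{I}}$, $q_{u}\in Q_{f_{U}}$ with $R(x,q_{i})$, $R(x,y)$, $R(y,q_{u})$. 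From $R(x,q_{i})$ the lemma yields $t_{1}\in\mathcal{L}(\mathcal{A}_{I},q_{i})$ with $\mathrm{eval}(t_{1})=x$; from $R(x,y)$ it yields $t_{2}$ with $t_{1}\Rightarrow^{\ast}_{\cal R}t_{2}$ and $\mathrm{eval}(t_{2})=y$; from $R(y,q_{u})$ it yields $t_{3}\in\mathcal{L}(\mathcal{A}_{U},q_{u})$ with $\mathrm{eval}(t_{3})=y$. Since $t_{2}$ is reachable from the initial term $t_{1}$, the overapproximation hypothesis $\mathcal{L}(\mathcal{A}^{\ast})\supseteq\{t\mid\exists t_{0}\in\mathcal{L}(\mathcal{A}_{I})\;t_{0}\Rightarrow^{\ast}_{\cal R}t\}$ forces $t_{2}\in\mathcal{L}(\mathcal{A}^{\ast})$, i.e. $y=\mathrm{eval}(t_{2})\in Q^{\ast}_{f}$. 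But then $t_{3}$, also with $\mathrm{eval}(t_{3})=y\in Q^{\ast}_{f}$, lies in $\mathcal{L}(\mathcal{A}^{\ast})$ as well as in $\mathcal{L}(\mathcal{A}_{U})$, contradicting $\mathcal{L}(\mathcal{A}^{\ast})\cap\mathcal{L}(\mathcal{A}_{U})=\emptyset$. Hence no witness exists and $\mathcal{M}\models\neg\psi_{P}$.

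The main obstacle I expect is the characterisation lemma together with the bookkeeping of the state-constants: I must interpret the constants of $\mathcal{A}_{I}$ and $\mathcal{A}_{U}$ inside one deterministic algebra so that $\Phi_{\Delta_{I}}$ and $\Phi_{\Delta_{U}}$ hold simultaneously, yet keep the description of the least relation $R$ tight enough to recover genuine ground $\mathcal{F}$-terms $t_{1},t_{2},t_{3}$ with the stated evaluations. The step "$R(x,y)$ yields $t_{1}\Rightarrow^{\ast}_{\cal R}t_{2}$" is the delicate one, because $R$ is generated not only by rewrite rules but also by transition atoms, transitivity and congruence, so the reachability reading must be justified with care (and one must confirm that $\psi_{P}$ is read over the final states $Q_{f_{I}},Q_{f_{U}}$, which is what places the extracted terms in the recognised languages).
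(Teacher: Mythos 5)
Your overall strategy coincides with the paper's: take automaton states as the finite domain, interpret the function symbols by the (totalised, determinised) transition maps, and refute $\psi_{P}$ by extracting ground witnesses that would have to lie in ${\cal L}({\cal A}^{\ast}) \cap {\cal L}({\cal A}_{U})$. The two real differences are organisational. First, the paper does not (and cannot) assume $Q_{I} \subseteq Q^{\ast}$ and $\Delta_{I} \subseteq \Delta^{\ast}$: the hypothesis is only that \emph{some} automaton with the two language properties exists, not that it was literally produced by completion from ${\cal A}_{I}$. The paper instead takes $D = Q_{I}^{\bot} \times Q_{\ast}^{\bot} \times Q_{U}^{\bot}$ and runs all three automata in parallel, componentwise, with $\bot$ as the sink. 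Your ``combine ${\cal A}^{\ast}$ with ${\cal A}_{U}$'' is that same product in disguise, but you must apply it to ${\cal A}_{I}$ as well rather than appeal to how completion is computed; once you do, the constructions coincide. Second, you define $[R]$ as the least relation closed under the Horn clauses of $\Phi_{P}$ and then need a characterisation lemma bounding it from above, whereas the paper defines $[R]$ semantically as $\{\langle [t_{1}],[t_{2}]\rangle \mid t_{1} \Rightarrow^{\ast} t_{2}\}$ (with $\Rightarrow$ the union of ${\cal R}$-rewriting and the transition relations) and must then check that the Horn clauses hold. These are the two directions of one equivalence; your division of labour makes ${\cal M} \models \Phi_{P}$ trivial and concentrates the work in the lemma, the paper's makes the extraction of witnesses trivial and leaves the closure conditions to be checked.

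The delicate points are the same on either route, and you have identified them correctly: a clause instance may involve a domain element denoted by no ground term (handled by restricting to the term-generated subalgebra, which is safe because the required property is universal once $[R]$ is fixed), and from $R(x,q_{i})$ one must disentangle a mixed $\Rightarrow^{\ast}$-chain, which interleaves ${\cal R}$-steps with transitions of several automata and may pass through terms containing state constants, into a pure ${\cal F}$-term accepted in state $q_{i}$. The paper asserts this extraction in one line; your lemma is where it would actually be proved. Note also that your reading of $\psi_{P}$ as ranging over the \emph{final} states $Q_{f_{I}}, Q_{f_{U}}$ is the one under which the concluding contradiction (membership in ${\cal L}({\cal A}_{I})$ and ${\cal L}({\cal A}_{U})$, hence in ${\cal L}({\cal A}^{\ast}) \cap {\cal L}({\cal A}_{U})$) actually goes through; as literally written the paper quantifies over all of $Q_{I}$ and $Q_{U}$, and the final step of its own proof silently uses your reading. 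Net assessment: the proposal is essentially the paper's proof with a fixpoint presentation of $[R]$; the only substantive repair needed is to replace the assumption $\Delta_{I} \subseteq \Delta^{\ast}$ by the product construction.
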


\begin{proof}
Assume the conditions of the theorem hold. 
Define the domain $D$ of the required model: $D = Q_{I}^{\bot} \times Q_{\ast}^{\bot} \times Q_{U}^{\bot}$, where $Q_{I}^{\bot} = Q_{I} \cup \{\bot \}$.

Define interpretations of contants $[c] = \langle a_{I}, a_{\ast}, a_{U} \rangle$, where 
\item $a_{x} = q$ if $(c,q) \in \Delta_{x}$, or $a_{x} = \bot$ otherwise, $x \in\{I, \ast, U \}$.

For a functional symbols $f$ of arity $n \ge 1$ define its interpretation 
$[f] : D^{n} \rightarrow D$ as follows  
%\marginpar{it works only for deterministic case!}

$[f](\langle a_{I}^{1},a_{\ast}^{1},a_{U}^{1}\rangle, \ldots, \langle a_{I}^{n},a_{\ast}^{n},a_{U}^{n}\rangle) = \langle a_{I}, a_{\ast}, a_{U} \rangle$, where for all $x \in \{I, \ast, U\}$,  either $(f(a_{x}^{1}, a_{x}^{2}, \ldots a_{x}^{n}) \rightarrow a_{x}) \in \Delta_{x}$, or $a_{x} = \bot$, otherwise.

Once we defined the interpretations of all functional symbols (including constants) any ground term $t$ gets its interpretation $[t] \in D$ in a standard way. Then it  is an easy consequence of definitions that $[t]$ is a triple of states of automata ${\cal A}_{I}$, ${\cal A}_{\ast}$, ${\cal A}_{U}$, respectively,  into which they get working on the input $t$. More formally, if $[t] = \langle a_{I}, a_{\ast}, a_{U}\rangle $, then for all $x \in \{I, \ast, U\}$ either $t \Rightarrow_{x}^{\ast} a_{x} \in Q_{x}$, or there is no such $q \in Q_{x}$ that $t \Rightarrow_{x}^{\ast} q$, and 
then $t \Rightarrow_{x}^{\ast} a_{x} = \bot$.

Define the interpretation $[R] \subseteq D \times D$ of $R$ as follows. 

$$ [R] = \{\langle [t_{1}], [t_{2}] \rangle \mid t_{1}, t_{2} \mbox{ are  ground in } D, t_{1} \Rightarrow^{\ast} t_{2} \}$$  
 
where $\Rightarrow$ denotes $\Rightarrow_{\cal R} \cup \Rightarrow_{\Delta_{I}} \cup \Rightarrow_{\Delta_{U}}$

Now we are going to show that in a such defined model ${\cal M}$ we have $\Phi_{P} \land \neg\psi_{P}$ satisfied. 
Recall  $\Phi_{P} = \Phi_{\cal F} \cup \Phi_{\Delta_{I}} \cup \Phi_{\Delta_{U}} \cup \Phi^{r}_{\cal R}$. 

%\begin{itemize}
We have 

\begin{itemize}

\item ${\cal M} \models \Phi_{\cal F}$ (by definition of rewriting and definition of $[R]$) 
\item ${\cal M} \models  \Phi_{\Delta_{I}} \cup \Phi_{\Delta_{U}} \cup \Phi^{r}_{\cal R}$  (by definition of $[R]$)

\end{itemize}

To show ${\cal M} \models \neg \psi_{P}$ assume the opposite i.e ${\cal M} \models \psi_{P}$ that is \\
${\cal M}  \models  \exists x \exists y  \vee_{q_{i} \in Q_{I}, q_{u} \in Q_{U}} R(x,q_{i}) \land R(x,y) \land  R(y,q_{u})$. That means there are $a, b \in D$ such that $(a,[q_{i}]) \in [R]$, $(a,b) \in [R]$, $(b,[q_{u}]) \in [R]$. 
Consider the ground terms $\tau_{1}$ and $\tau_{2}$ such that $[\tau_{1}] = a$ and $[\tau_{2}] = b$.  We have 
$\tau_{1} \in {\cal L}({\cal A}_{I})$, $\tau_{1} \Rightarrow^{\ast} \tau_{2}$, $\tau_{2} \in {\cal L}({\cal A}_{U}$). It follows that $\tau_{2} \in {\cal L}({\cal A^{\ast}}) \cap {\cal L}({\cal A_{U}})$ which contradicts to the assumption of the theorem on emptiness of  ${\cal L}({\cal A^{\ast}}) \cap {\cal L}({\cal A_{U}})$. 

\end{proof}

\begin{note}
The above model construction serves only the purpose of proof and it is not efficient in practical use of the method. 
Instead we assume that the task of model construction is delegated to a generic finite model building procedure. 
\end{note}

\subsection{Variations on a theme}

Theorem~\ref{th:completeness} provides with a lower bound for  the verifying power of FCM method applied to a basic verification problem. 
In this section we consider practically important variations of the basic verification problem which allow  
simplified translations and  more efficient verification. 

\subsubsection{Finitely based sets of terms} 

In many cases of safety verification tasks for TRS the sets of initial and/or unsafe terms are given not by tree automata, but rather described as the 
sets of ground instances of terms from a given finite set of terms. More precisely, let $B$ be a finite set of terms in a vocabulary 
${\cal F}$ and  $g(B) = \{\tau \mid \exists t \in B  \land \tau = t\theta; \theta \mbox{ is ground }\}$. It is easy to see that for the finite $B$ 
$g(B)$ is a regular set. 

Consider the following modification of the basic verification problem. 

\begin{problem}~\label{prob:fb}

\noindent
\begin{description}
\item[{\bf Given:}] Finite sets of terms  $B_{I}$ and $B_{U}$, a term-rewriting system 
${\cal R}$ 

\item[{\bf Question:}] Does  $\forall t_{1} \in g(B_{I}) \; \forall t_{2} \in g(B_{U})\; 
t_{1} \not\Rightarrow_{\cal R}^{\ast} t_{2}$ hold?    
\end{description}

\end{problem} 

Let $P = \langle B_{I},{\cal R},B_{U} \rangle$ be an instance of the Problem~\ref{prob:fb}.  

The translation $\Phi_{\cal R}$ of the term rewriting system ${\cal R}$ is defined in \ref{sec:translation}.  
 
The translation of (negation of) correctness condition from $P$  is defined as   
$\psi_{P} = \exists \bar{x}   \vee_{t_{1} \in g(B_{I}),t_{2} \in g(B_{U}) } R(t_{1},t_{2})$. 

Now we have the following  analogue of Proposition~\ref{prop:correct}

\begin{proposition}~\label{prop:correct2}(Correctness of the translation)

Let $P$ be an instance of the basic verification problem as detailed above. 
Then if $P$  has a negative  answer then $\Phi_{\cal R} \vdash \psi_{P}$
\end{proposition}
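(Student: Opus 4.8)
The plan is to mirror the proof of Proposition~\ref{prop:correct}, but to replace the appeal to the automata translation (Proposition~\ref{prop:adequacyA}) by a direct argument about ground instances, since here the initial and unsafe sets are presented as $g(B_{I})$ and $g(B_{U})$ rather than by automata. The whole statement reduces to combining Proposition~\ref{prop:adequacyTR} with a single application of existential generalization, so I do not expect a deep difficulty; the care is entirely in the bookkeeping of substitutions. First I would unfold the hypothesis: by the definition of Problem~\ref{prob:fb}, a negative answer means that the sentence $\forall t_{1} \in g(B_{I})\,\forall t_{2} \in g(B_{U})\; t_{1} \not\Rightarrow_{\cal R}^{\ast} t_{2}$ fails, so there are \emph{ground} terms $t_{1} \in g(B_{I})$ and $t_{2} \in g(B_{U})$ with $t_{1} \Rightarrow_{\cal R}^{\ast} t_{2}$. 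By the definition of $g(\cdot)$ there are $s_{1} \in B_{I}$, $s_{2} \in B_{U}$ and ground substitutions with $t_{1} = s_{1}\theta_{1}$ and $t_{2} = s_{2}\theta_{2}$.

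Next I would apply Proposition~\ref{prop:adequacyTR} to the ground derivation $t_{1} \Rightarrow_{\cal R}^{\ast} t_{2}$, obtaining $\Phi_{\cal R} \vdash R(t_{1},t_{2})$. It then remains to recognise $R(t_{1},t_{2})$ as a ground instance of one of the disjuncts of $\psi_{P}$ and to apply existential generalization: from $\Phi_{\cal R} \vdash R(s_{1},s_{2})\theta$ with $\theta$ ground one infers $\Phi_{\cal R} \vdash \exists \bar{x}\, R(s_{1},s_{2})$, and weakening to the full disjunction gives $\Phi_{\cal R} \vdash \exists \bar{x}\, \vee_{s_{1}\in B_{I},\,s_{2}\in B_{U}} R(s_{1},s_{2})$, which is $\psi_{P}$.

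The one genuine subtlety — the main obstacle — is that the two ground instances arise from \emph{independent} substitutions $\theta_{1},\theta_{2}$, whereas the disjunct $R(s_{1},s_{2})$ binds the variables of $s_{1}$ and $s_{2}$ simultaneously, so a single witnessing substitution is needed. To handle this I would assume, without loss of generality, that the terms of $B_{I}$ and of $B_{U}$ are variable-disjoint (renaming variables leaves $g(B_{I})$ and $g(B_{U})$ unchanged), and then take $\theta$ to agree with $\theta_{1}$ on $Var(s_{1})$ and with $\theta_{2}$ on $Var(s_{2})$; then $t_{1} = s_{1}\theta$, $t_{2} = s_{2}\theta$, and $R(t_{1},t_{2}) = R(s_{1},s_{2})\theta$ is a genuine ground instance of the chosen disjunct, so the existential step goes through. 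I would also clarify that in $\psi_{P}$ the disjunction is effectively indexed by the finite sets $B_{I},B_{U}$, the variables of the $s_{i}$ being exactly those bound by $\bar{x}$, rather than literally by the infinite sets $g(B_{I}),g(B_{U})$; this reading is what makes $\psi_{P}$ a finite first-order sentence and is implicit in the statement.
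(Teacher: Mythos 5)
The paper states Proposition~\ref{prop:correct2} without any proof, presenting it only as the ``analogue'' of Proposition~\ref{prop:correct}; your argument is exactly the intended one --- reduce a negative answer to a ground derivation $t_{1}\Rightarrow^{\ast}_{\cal R}t_{2}$, invoke Proposition~\ref{prop:adequacyTR} to get $\Phi_{\cal R}\vdash R(t_{1},t_{2})$, and existentially generalize --- and it is correct. Your two clarifications (reading the disjunction in $\psi_{P}$ as indexed by the finite bases $B_{I},B_{U}$ rather than by $g(B_{I}),g(B_{U})$, and merging the two independent ground substitutions into one after renaming the base terms apart) are points the paper glosses over and are handled correctly.
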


\subsubsection{Rewriting strategies} 

Another simplification of the translation may come from the restrictions on the rewriting strategies in TRSs. 
If rewriting can only be applied at the outer level, i.e. redex can be only the whole term, not its proper subterm, then the first-order 
translation of an TRS can be simplified by using unary reachability predicate $R(-)$ instead of binary $R(-,-)$. 
The intended meaning of $R(t)$ is ``term $t$ is reachable from some of the initial terms (using outermost strategy)''.  We omit the obvious details of 
translation (axiomatization of $R$) and  rather refer to an Example~\ref{readers-writers}. 
Notice, that congruence axioms are not needed in this case and it was observed empirically that their absence makes the countermodel search 
%much 
more efficient. 

%\subsubsection{Rewriting modulo theory} 

\section{Experiments}~\label{sec:experiments}
In this section we present three examples of application of FCM method for safety verification and compare the results with the results of alternative methods reported in the literature.

\subsection{Parity of $n^{2}$}

\begin{example}~\label{parity}

The following verification task is taken from \cite{Timbuk2001,LPAR08}.

Let $P_{n^{2}} = \langle {\cal A}_{I},{\cal R},{\cal A}_{U}\rangle$ be an instance of basic verification task. Term rewriting system ${\cal R}$ consists of the following rewriting rules 

\begin{itemize}
\item $plus(0,x) \rightarrow x$ 
\item $plus(s(x),y) \rightarrow  s(plus(x,y))$
\item $times(0,x) \rightarrow 0$
\item $times(s(x),y) \rightarrow plus(y,times(x,y))$
\item $square(x) \rightarrow times(x,x)$ 
\item $even(0)\rightarrow true$ 
\item $even(s(0)) \rightarrow false$
\item $even(s(x)) \rightarrow odd(x))$
\item $odd(0) \rightarrow false$
\item $odd(s(0)) \rightarrow true$ 
\item $odd(s(x)) \rightarrow even(x)$     
\item $even(square(x)) \rightarrow odd(square(s(x)))$ 
\item $odd(square(x)) \rightarrow even(square(s(x)))$ 
\end{itemize}

The tree automaton ${\cal A}_{I}$ recognizes the set of initial terms. It has the set of states $Q_{I} = \{s0,s1,s2\}$, the set
of the final states $Q_{I_{f}} = \{s0\}$ and the set of rewriting rules $\Delta_{I} = \{even(s1) \rightarrow s0, square(s2) \rightarrow s1, 
0 \rightarrow s2 \}$ It is easy to see that ${\cal L}({\cal A}_{I}) = \{even(square(0))\}$

The tree automaton ${\cal A}_{U}$ recognizes the set of unsafe terms. It has the set of states $Q_{U} = Q_{U_{f}} = \{q0\}$ and the set of rewriting rules 
$\Delta_{U} = \{false \rightarrow q0 \}$. 

So the question of the verification problem $P_{n^{2}}$ is whether $false$ is reachable from $even(square(0))$. 
%Since from $even(square(0))$ in ${\cal R}$ the terms of the form $even(square(s^{2k}(0)))$ are reachable, the question can be reformulate

First-order translation $\Phi_{P}$ of $P_{n^{n}}$ consists of the following formulae: 

\begin{itemize}

\item $R(plus(0,x),x)$ 
\item $R(plus(s(x),y), s(plus(x,y)))$
\item $R(times(0,x),0)$
\item $R(times(s(x),y),plus(y,times(x,y)))$
\item $R(square(x),times(x,x))$ 
\item $R(even(0),t)$
\item $R(even(s(0)),f)$
\item $R(even(s(x)),odd(x))$
\item $R(odd(0),f)$
\item $R(odd(s(0)),t)$
\item $R(odd(s(x)),even(x))$     
\item $R(even(square(x)),odd(square(s(x))))$ 
\item $R(odd(square(x)),even(square(s(x))))$ 
\item $R(x,y) \land R(y,z) \rightarrow  R(x,z)$ 
\item $R(x,y) \rightarrow R(even(x),even(y))$
\item $R(x,y) \rightarrow R(odd(x),odd(y))$
\item $R(x,y) \rightarrow  R(plus(x,z),plus(y,z))$ 
\item $R(x,y) \rightarrow  R(plus(z,x),plus(z,y))$ 
\item $R(x,y) \rightarrow  R(times(x,z),times(y,z))$ 
\item $R(x,y) \rightarrow  R(times(z,x),times(z,y))$ 
\item $R(x,y) \rightarrow  R(square(x),square(y))$
\item $R(0,s2)$
\item $R(even(s1),s0)$
\item $R(square(s2),s1)$ 
\item $R(f,q0)$

\end{itemize}

%exists x T(even(x),t)  & T(even(square(x)),f).

The formula $\psi_{P}: \exists x \exists y (R(x,s0) \land R(x,y) \land R(y,q0)$ expresses the negation of correctness condition. 

The finite model finder Mace4 has found a finite countermodel for $\Phi_{P} \rightarrow \psi_{P}$ 
(i.e  a finite model for $\Phi_{P} \land \neg \psi_{P}$) in 0.03s (see further details in \ref{subsec:exp}). The domain $D$ of the model  
is a two element set $\{0,1\}$. Interpretations of constants: [$f$] = [$q0$] = [$s1$] = [$s2$] = 0; [$s0$] = [$t$] = 1.  
Interpretations of functions: [$even$](0) = 1, [$even$](1) = 0;  [$odd$](0) = 0, [$odd$](1) = 1;
[$s$](0) = 1, [$s$](1) = 0; [$square$](0) = 0; [$square$](1) = 1; [$plus$](x,y) = $(x + y) mod 2$;
[$times$](x,y) = $x \times y$. Interpretation of reachability relation: [$R$] = $\{(0,0),(1,1)\}$. 

Notice that verification is done here automatically. This can be contrasted with the verification of the same system by  
a tree completion algorithm implemented in Timbuk system \cite{Reach2004},  where an user interaction was required to 
add an approximation equation $s(s(x)) = x$ manually. In \cite{LPAR08} an automated verification of the same 
system was reported    using Horn Clause approximation technique. The system was specified as a Horn Clause program and the 
verification followed by producing a model for the program which contained 53 elements. The above model produced by Mace4 within 
FCM approach provides with much more concise explanation of why the safety holds: interpretation of any ground term (0 or 1) 
is an invariant for reachability in TRS, [$even(square(0))$] = 1 and [$f$] = 0.        

\end{example}

\subsection{Readers-writers system verification}

In this subsection we consider the example of a readers-writers system verification  taken from \cite{Cla2007,EqApp2010}.

\begin{example}~\label{readers-writers}

In the  TRS specifying the system the only \emph{outermost} rewriting is possible, so for the translation we use \emph{monadic} 
reachability predicate. Furthermore, both the set of initial terms and the set of unsafe terms are finitely based. 
The vocabulary consists the constant $0$, unary functional symbol $s$ (for successor) and binary functional symbol $state$.

The rules are as follows

\begin{itemize}
\item $state(0,0) \rightarrow state(0,s(0))$
\item $state(x,0) \rightarrow state(s(x),0)$
\item $state(x,s(y)) \rightarrow state(x,y)$
\item $state(s(x),y) \rightarrow state(x,y)$
\end{itemize} 

The set of initial terms  is $I = \{state(0,0) \}$. 

The set of unsafe terms $U$ is finitely based with the base \\ $B = \{state(s(x),s(y)), state(x,s(s(y))) \}$.

The first-order translation $\Phi$ consists the conjunction of the following formulae

\begin{itemize}
\item $R(state(0,0))$
\item $R(state(0,0)) \rightarrow R(state(0,s(0)))$
\item $R(state(x,0)) \rightarrow R(state(s(x),0))$
\item $R(state(x,s(y))) \rightarrow R(state(x,y))$
\item $R(state(s(x),y)) \rightarrow R(state(x,y))$
\end{itemize}

The formula 
$\psi \equiv \exists x \exists y  R(s(x),s(y)) \lor R(x,s(s(y)))$ expresses the negation of the correctness condition.

The system can be then successfully verified by an FCM method. 
The search for the countermodel for $\Phi \rightarrow \psi$ took 0.01s  and  the model found is as follows. 

The domain $D$ of the model  
is a three element set $\{0,1,2\}$;  [$s$](0) = 1, [$s$](1) = 2, [$s$](2) = 2;  [$R$] = $\{(0,0),(0,1),(1,0),(2,0) \}$.

Notice that no additional information is needed for FCM method to automatically verify the  reader-writer system. That may be contrasted with 
 the verification using tree automata completion approach (Timbuk 3.0 system), reported in \cite{EqApp2010} where an equational abstraction 
rule $s(s(x)) = s(s(0))$ should be manually added to the TRS for the successful verification.

\end{example}  

\subsection{Reverse function}  
In this section we consider a verification problem from \cite{Timbuk2001}. The problem here is to show that list reverse function satisfies the 
following property: if in a list all symbols `a' are before all symbols `b' then after reversing there are no `a' before `b`.

% where it was used to illustrate the implementation of tree automata completion technique within Timbuk system. 

\begin{example}~\label{reverse}

Vocabulary ${\cal F}$ consists of one 0-ary functional (constant) sumbol $0$ and three binary symbols $app$, $cons$, 
$rev$. 

The automaton recognizing is initial terms is defined as  ${\cal A}_{I} = \langle {\cal F}, Q_{I}, Q_{{f}_{I}}, 
\Delta_{I}  \rangle$, where ${\cal F}$ is as defined above; 
$Q_{I} = \{qrev,qlab,qlb,qa,qb\}$;  $Q_{f_{I}} = \{qrev\}$; $\Delta_{I}$ contains

\begin{itemize}
\item $rev(qlab) \rightarrow qrev$
\item $cons(qa,qlab) \rightarrow qlab$
\item $0 \rightarrow qlb$
\item $a \rightarrow qa$
\item $0 \rightarrow qlab$
\item $cons(qa,qlb) \rightarrow qlab$
\item $cons(qb,qlb) \rightarrow qlb$
\item $b \rightarrow qb$
\end{itemize}

The automaton recognizing \emph{unsafe} terms is defined as ${\cal A}_{U} = \langle {\cal F}, Q_{U}, Q_{{f}_{U}}, 
\Delta_{U}  \rangle$, where ${\cal F}$ is as above; $Q_{U} = \{qlab1,qlb1,q1,qa,qb\}$, $Q_{f_{U}} = \{ qlab1\}$; $\Delta_{U}$ contains 

\begin{itemize}
\item $cons(qa,qlab1) \rightarrow qlab1$
\item $cons(qa,qlb1) \rightarrow qlab1$
\item $cons(qa,q1) \rightarrow q1$
\item $a \rightarrow qa$
\item $0 \rightarrow q1$
\item $cons(qb,qlab1) \rightarrow qlab1$
\item $cons(qb,q1) \rightarrow qlb1$
\item $cons(qb,q1) \rightarrow q1$
\item $b \rightarrow qb$
\end{itemize}

The term-rewriting system ${\cal R}$ consists of the following rules

\begin{itemize}
\item $app(0,x) \rightarrow x$ 
\item $app(cons(x,y),z) \rightarrow cons(x,app(y,z))$
\item $rev(0) \rightarrow  0$
\item $rev(cons(x,y)) \rightarrow app(rev(y),cons(x,0))$
\end{itemize}
\end{example}

First-order translation $\Phi_{P}$ consists of the following formulae.  

\begin{itemize}
\item $R(rev(qlab),qrev)$ 
\item $R(cons(qa,qlab),qlab)$ 
\item $R(0,qlb)$
\item $R(a,qa)$
\item $R(0,qlab)$
\item $R(cons(qa,qlb),qlab)$
\item $R(cons(qb,qlb),qlb)$
\item $R(b,qb)$

\item $R(cons(qa,qlab1),qlab1)$
\item $R(cons(qa,qlb1),qlab1)$
\item $R(cons(qa,q1),q1)$
\item $R(0,q1)$
\item $R(cons(qb,qlab1),qlab1)$
\item $R(cons(qb,q1),qlb1)$
\item $R(cons(qb,q1),q1)$
\item $R(b,qb)$
\item $R(app(0,x),x)$
\item $R(app(cons(x,y),z),cons(x,app(y,z)))$
\item $R(rev(0),0)$
\item $R(rev(cons(x,y)),app(rev(y),cons(x,0)))$

\item $(R(x,y) \land R(y,z)) \rightarrow  R(x,z)$

\item $R(x,x)$ 
\item $R(x,y) \rightarrow  R(rev(x),rev(y))$ 
\item $R(x,y) \rightarrow  R(cons(z,x),cons(z,y))$ 
\item $R(x,y) \rightarrow R(cons(x,z),cons(y,z))$
\item $R(x,y) \rightarrow  R(app(z,x),app(z,y))$ 
\item $R(x,y) \rightarrow  R(app(x,z),app(y,z))$
\end{itemize}

The formula $\psi_{P}: \exists x \exists y ((R(rev(x),qrev) \land R(y,qlab1)) \land R(rev(x),y)$ expresses the negation of the correctness condition. 

For this standard encoding Mace4 has failed to find a countermodel for $\Phi_{P} \rightarrow \psi_{P}$ within 40000s. However after removing 
the congruence axiom  $R(x,y) \rightarrow R(rev(x),rev(y))$ Mace4 has found the model of size 3 (cardinality of the domain) in 0.06s. 
(see further details in  \cite{AL09}.  
The absence of such a congruence axiom means that no rewriting of proper subterms of $rev(\ldots)$  is allowed.
One can either easily argue that in TRS given above no such rewriting possible anyway, or, remaining in a pure automated verification scenario, 
just accept verification modulo restrictions on the rewriting strategy. This can be contrasted with the verification of the same system in 
\cite{Timbuk2001} using tree automata completion technique, which required  interactive approximation.

\subsection{Experimental results}~\label{subsec:exp}

\noindent 
In the experiments we used the finite model finder Mace4\cite{McCune} within the package 
Prover9-Mace4, Version 0.5, December 2007. 
It is not the latest available version, but it provides with convenient GUI for both the theorem prover and the finite 
model finder. The system configuration used in the experiments:   Microsoft Windows XP Professional, Version 2002, Intel(R) Core(TM)2 Duo CPU, 
T7100 @ 1.8Ghz  1.79Ghz,  1.00 GB of  RAM.  The time measurements are done by Mace4 itself, upon completion 
of the model search it communicates the CPU time used. The table below lists the parameterised/infinite state 
protocols together with the references and shows the time it took Mace4 to 
find a countermodel and verify a safety property.  
The time shown is an average of 10 attempts. $\infty$ means not return in 40000s.

\begin{center}
  \begin{tabular}{| l | c | r | }
    \hline
    Problem & Reference & Time \\ \hline
    Parity of $n^{2}$  & \cite{Timbuk2001} & 0.3s \\ \hline
    Readers-Writers & \cite{EqApp2010} & 0.01s \\
    \hline
    Reverse & \cite{Timbuk2001}  & $\infty$ \\
\hline
 Reverse (no congruence for rev) II &  Example~\ref{reverse} & 0.06s\\
\hline
  \end{tabular}
\end{center}

\section{Related work}

\subsection{Discussion and Related work}

The verification of safety properties  for term-rewriting systems using tree automata completion techniques has been addressed in 
\cite{Timbuk2001,Reach2004,EqApp2010}. The paper \cite{LPAR08} presents  a method based on encoding both term-rewriting system and tree automata into Horn logic and application of 
the static analysis techniques to compute a tree automaton accepting an approximation of the set of reachable terms.    
%This is closest work to that presented in this paper and encoding in first-order logic is 
%similar. 
The main conceptual difference between these approaches and FCM presented in this paper, is that in \cite{Timbuk2001,Reach2004,EqApp2010,LPAR08} the safety verification is performed in two stages: first,  a tree automaton   approximating all reachable terms is 
obtained and it  depends only on TRS but \emph{not on the safety property}, and, second, an intersection of the language of this automaton with the language of unsafe states is computed. 
FCM method we presented here operates in  
one stage and computing \emph{regular approximations} (in terms of finite countermodels) is done for concrete safety properties. It has its disadvantage that the results of the verification of a TRS can not be re-used for the verification of different safety properties for the same TRS. On the other hand  this disadvantage is compensated by a higher degree of automation and higher explanatory power of FCM methods as our experimental results suggest.     Another advantage of FCM is its flexibility. Rewriting modulo theory can be easily incorporated into a general FCM framework and previous work on FCM  illustrates this point. In \cite{AL10} dealing with the verification 
of  lossy automata and cache coherence protocols, rewriting modulo first-order specifications of automata and modulo simple arithmetics, was used. In \cite{AL10arxiv} the translation of regular model  checking into FCM framework,  the  associativity of a monoid multiplication was explicitly specified.

As mentioned Section 1 the approach to verification using the modeling of protocol executions by 
first-order derivations and together with  countermodel finding for disproving was introduced within the research on the 
formal analysis of cryptographic protocols.  It can be traced back to the early papers by Weidenbach \cite{Weid99} and by 
Selinger \cite{S01}. In \cite{Weid99} a decidable fragment  of Horn clause logic has been identified for which resolution-based 
decision procedure has been proposed (disproving by the procedure amounts to the termination of saturation without producing a 
proof). It was also shown  that the fragment is expressive enough to encode cryptographic protocols 
and the approach has been illustrated by the automated verification of some protocols using the SPASS theorem prover. In \cite{S01}, 
apparently for the first time,  explicit building of finite countermodels has been proposed as a tool to establish correctness 
of cryptographic protocols. It has been illustrated by an example, where a countermodel was produced manually, and the
automation of the process has not been discussed.  The later work by Goubault-Larrecq \cite{GL08} has shown how a countermodel produced 
during the verification of cryptographic protocols can be converted into a formal induction  proof. 
Also, in \cite{GL08} different approaches to  model building have been discussed and it was argued that  
an implicit model building procedure using alternating tree automata  is more efficient in the situations when no small 
countermodels exist. Very recently, in the paper \cite{JW09} by J.~Jurgens and T.~Weber, an extension of Horn clause logic was proposed and 
the soundness of a countermodel finding procedure for this fragment has been shown, again in the context of cryptographic 
protocol verification. 
%Furthermore, in \cite{JW09} an approach to the verification of parameterized cryptoprotocols is proposed.

The work we reported in this paper differs from all the approaches mentioned previously in two important  aspects. 
Firstly, to the best of our knowledge,  none of the previous work addressed verification via countermodel finding applied 
outside of the  area of cryptographic protocols (that includes the most recent work \cite{Gut} we are aware of).
Secondly, the (relative)  completeness for the classes of verification tasks has not been addressed in  previous work.

\end{document}